\providecommand\underarrow@[3]{%
  \vtop{\ialign{##\crcr$\m@th\hfil#2#3\hfil$\crcr
  \noalign{\nointerlineskip\kern.12\baselineskip}#1#2\crcr}}}
\providecommand{\underrightarrow}{%
  \mathpalette{\underarrow@\rightarrowfill@}}
\providecommand\rightarrowfill@{\arrowfill@\relbar\relbar\longrightarrow}
\providecommand\arrowfill@[4]{%
  $\m@th\thickmuskip0mu\medmuskip\thickmuskip\thinmuskip\thickmuskip
   \relax#4#1\mkern-7mu%
   \cleaders\hbox{$#4\mkern-2mu#2\mkern-2mu$}\hfill
   \mkern-7mu#3$%
}
\newcommand{\erfc}[1]{\mathrm{erfc}\! \left( {#1} \right)}
\newcommand{\bx}{\boldsymbol{\xi}}
\newenvironment{proof}[1][Proof:]{\begin{trivlist}
\item[\hskip \labelsep {\bfseries #1}]}{\end{trivlist}}
\newenvironment{proposition}[1][Proposition:]{\begin{trivlist}
\item[\hskip \labelsep {\bfseries #1}]}{\end{trivlist}}
\newenvironment{remark}[1][Remark:]{\begin{trivlist}
\item[\hskip \labelsep {\bfseries #1}]}{\end{trivlist}}
\newcommand{\qed}{\nobreak \ifvmode \relax \else
      \ifdim\lastskip<1.5em \hskip-\lastskip
      \hskip1.5em plus0em minus0.5em \fi \nobreak
      \vrule height0.75em width0.5em depth0.25em\fi}
\begin{document}

\title[Single file diffusion]{Large deviation function of a tracer position in single file diffusion}
\author{Tridib Sadhu$^{1}$ and Bernard Derrida$^{2}$}
\vspace{1pc}
\address{$^1$ Philippe Meyer Institute for Theoretical Physics, Physics Department, Ecole Normale Sup\'{e}rieure, 24 rue Lhomond, 75231 Paris Cedex 05 - France}
\ead{tridib.sadhu@lpt.ens.fr}
\vspace{1pc}
\address{$^2$ Laboratoire de Physique Statistique, Ecole Normale Sup\'{e}rieure, Universit\'{e} Pierre et Marie Curie,
Universit\'{e} Paris Diderot, CNRS, 24 rue Lhomond, 75231 Paris Cedex 05 - France }
\ead{derrida@lps.ens.fr}
\begin{abstract}
Diffusion of impenetrable particles in a crowded one-dimensional channel is referred as the single file diffusion. The particles do not pass each other and the displacement of each individual particle is sub-diffusive. We analyse a simple realization of this single file diffusion problem where one dimensional Brownian point particles interact only by hard-core repulsion. We show that the large deviation function which characterizes the displacement of a tracer at large time can be computed via a mapping to a problem of non-interacting Brownian particles. We confirm recently obtained results of the one time distribution of the displacement and show how to extend them to the multi-time correlations. The probability distribution of the tracer position depends on whether we take annealed or quenched averages. In the quenched case we notice an exact relation between the distribution of the tracer and the distribution of the current. This relation is in fact much more general and would be valid for arbitrary single file diffusion.  It allows in particular to get the full statistics of the tracer position for the symmetric simple exclusion process (SSEP) at density $1/2$ in the quenched case.
\end{abstract}

\date{\today}
\vspace{2pc}
\noindent{\it Keywords}: Single file diffusion, Large deviation function, Brownian motion with hard core repulsion,  Tracer.

\submitto{A special issue of J Stat Mech}
\maketitle

\section{Introduction}
Motion in crowded one dimensional channels appears in many physical systems. For example, passage of ions through a narrow pore in a cell membrane \cite{HODGKIN1955}, sliding of proteins along DNA sequences \cite{Li2009}, transport in ionic conductors \cite{Richards1977}, molecular motion inside a porous medium \cite{KARGER1992} or inside carbon nano-tubes \cite{Das2010}, \textit{etc}. What makes the single file diffusion interesting from a theoretical perspective is that the motion of individual particles is sub-diffusive. The variance of the displacement $x(t)$ of a tracer particle over a time $t$ scales as $\sqrt{t}$ as opposed to the linear dependence one finds in normal diffusion. The tracer particle is indistinguishable from the other particles: it only carries a mark. This sub-diffusive scaling has been demonstrated in several experiments on single file systems (transport of water molecules inside carbon nano-tubes \cite{Das2010}, colloidal suspensions in a narrow groove \cite{Wei2000,Lutz2004,Lin2005}, large molecules in a porous medium like Zeolite \cite{KUKLA1996}, \textit{etc}.).

There is a long history of theoretical works on single file diffusion. One of the earliest known results is by Harris \cite{Harris1965} where an explicit formula for the variance was reported for a system of impenetrable Brownian point particles. Later, Levitt \cite{Levitt1973} generalized the work for a system of hard rods and derived an analogous formula for the variance. An interesting feature was noted in \cite{Harris1965,Levitt1973,Jara2006,Arratia1983} that at large times the fluctuations of the tracer position become Gaussian. This was confirmed from an exact solution of the probability distribution of the tracer position for Brownian point particles, which is valid even at finite times \cite{Rodenbeck1998}. The analysis for a system of point particles is simpler compared to a more general single file problem because in the former system the trajectories of the impenetrable particles can be mapped to the trajectories of non-interacting particles \cite{Harris1965}. For more general inter-particle interactions, less results are available: for symmetric simple exclusion process, the variance and the fourth cumulant are known \cite{Arratia1983,KMS2014,KMS_JSP}, while for colloidal systems an expression for the variance was found in terms of isothermal compressibility \cite{Kollmann2003}. An expression of the variance for general single file system was derived in \cite{KMS2014,KMS_JSP} using the macroscopic fluctuation theory \cite{Bertini2014,Jona-Lasinio2010,Bertini2007,Derrida2007}. There is an interesting connection to interface fluctuations where the displacement of a tracer can be mapped to the height of a one-dimensional interface \cite{Majumdar1991}.

In recent years, attention was drawn to an intriguing property of the single file diffusion where the variance depends on the choice of initial state, even at large times \cite{Leibovich2013,KMS2014,KMS_JSP,Gupta2007}. There are two types of initial state one can consider \cite{Gerschenfeld2009}: annealed and quenched. In the annealed case, the starting arrangement of particles are drawn from an equilibrium state of average density $\rho$. This includes configurations where the density profile fluctuates significantly from the average value. In the quenched case, only those initial configurations are considered which at a macroscopic scale have a uniform density profile $\rho$ (for example, a configuration where particles are placed at uniform separation $\rho^{-1}$). It was found \cite{Leibovich2013} that the variance is different in the two settings, although the $\sqrt{t}$ scaling is the same. This is surprising as one would naively expect that a system forgets its initial state after a long time.

Drawing analogy with disordered systems \cite{Gerschenfeld2009}, the two settings can be described in terms of cumulant generating function. For the annealed setting the generating function of the cumulants of the position $x(t)$ of the tagged particle can be defined as
\numparts
\begin{equation}
M_{a}(\lambda,t)=\ln\bigg[\bigg\langle e^{\lambda\, x(t)} \bigg\rangle_{\textrm{\tiny evolution+initial}}\bigg]
\label{eq:mut}
\end{equation}
where $\lambda$ is a fugacity parameter and the average is over both the stochastic evolution and the initial state. On the other hand, for the quenched setting, the same generating function reads
\begin{equation}
M_{q}(\lambda,t)=\bigg\langle\ln\bigg[\bigg\langle e^{\lambda\, x(t)} \bigg\rangle_{\textrm{\tiny evolution}}\bigg]\bigg\rangle_{\textrm{\tiny initial}}
\label{eq:muq}
\end{equation}
\endnumparts
where the average inside logarithm is over the stochastic evolution, whereas the average outside is over the initial state.
Cumulants are obtained as usual by expanding $M_{\alpha}(\lambda,t)$ in powers of $\lambda$:
\begin{equation}
\bigg\langle x(t)^n \bigg\rangle_{\alpha}=\frac{d^nM_{\alpha}(\lambda,t)}{d\lambda^n}\Bigg\vert_{\lambda=0}
\label{eq:cumulant expansion}
\end{equation}
where we use subscript $\alpha$ to denote the annealed and the quenched case. This notation will be used in the rest of the paper.

It was found in \cite{KMS2014} that not only the variance, but all the cumulants of the tracer position scale as $\sqrt{t}$, at large time. This implies that the cumulant generating functions $M_{a}(\lambda,t)$ and $M_{q}(\lambda,t)$ grow as $\sqrt{t}$. This means that for $x(t)\sim \sqrt{t}$, the probability distribution $P_{\alpha}\left(x(t)\right)$ of the tracer position $x(t)$ at time $t$ is of the form
\begin{equation}
P_{\alpha}\left(\xi\sqrt{t}\right)\asymp e^{-\sqrt{t}~\phi_{\alpha}(\xi)}
\label{eq:ldform}
\end{equation}
where $\phi_{\alpha}(\xi)$ is the large deviation function. The symbol $A\asymp B$ denotes that the ratio of the logarithms of $A$ and $B$ goes to $1$ at large $t$, \textit{i.e.}, $\lim_{t\rightarrow \infty}\frac{\ln A}{\ln B}=1$. The large deviation functions are related to their corresponding cumulant generating functions by a Legendre transformation
\begin{equation}
M_{\alpha}(\lambda,t)=\sqrt{t}\>\max_{\xi}\bigg\{\lambda\>\xi-\phi_{\alpha}(\xi)\bigg\}.
\label{eq:legendre}
\end{equation}

Recently, the large deviation function $\phi_{\alpha}(\xi)$ was computed for a single file system of Brownian point particles using the macroscopic fluctuation theory \cite{KMS2014,KMS_JSP} and also starting from microscopic dynamics \cite{Hegde2014,KMS_JSP}. In both approaches the analysis involves rather long calculations.

In this paper, we present an alternative derivation of the large deviation function which in our opinion is simpler. Our analysis is for impenetrable Brownian point particles and relies on a connection between the trajectories of the impenetrable particles and the trajectories of non-interacting particles, \textit{i.e.}, Brownian particles allowed to cross. Using this mapping we calculate the large deviation function in the annealed and the quenched setting, confirming earlier results \cite{KMS2014,KMS_JSP,Hegde2014}. The large deviation functions are different in the two settings; even the asymptotics of the functions differ significantly: in the annealed setting $\phi_{a}(\xi)\sim \vert \xi \vert$, whereas in the quenched setting $\phi_{q}(\xi)\sim \vert \xi\vert^3$, for large $\xi$.

An advantage of our method is that it can be easily extended to analyse the correlations of the tracer position at multiple times. The joint probability $P(x(t_1),x(t_2))$ of the tracer position $x(t_1)$ at time $t_1$ and $x(t_2)$ at time $t_2$ has a form analogous to \eref{eq:ldform}:
\begin{equation}
P_{\alpha}\left(\xi(\tau_1)\sqrt{t},\xi(\tau_2)\sqrt{t}\right)\asymp e^{-\sqrt{t}~\phi_{\alpha}(\xi( \tau_1),\xi(\tau_2))}
\label{eq:ldform two}
\end{equation}
where $t_{j}=\tau_{j}t$ and $x(t_j)=\xi(\tau_j)\sqrt{t}$. We will show how to calculate the large deviation function $\phi_{\alpha}(\xi( \tau_1),\xi(\tau_2))$ for both quenched and annealed cases. This will allow us to recover the two time correlations of the tracer position \cite{Leibovich2013}
\begin{equation}
\langle x(t_{1})x(t_{2})\rangle =\cases{ \kappa(\rho)\bigg[\sqrt{t_1}+\sqrt{t_2}-\sqrt{\vert t_2-t_1\vert }\bigg] & for annealed \cr
\kappa(\rho)\bigg[\sqrt{t_1+t_2}-\sqrt{\vert t_2-t_1 \vert}\bigg] & for quenched }
\label{eq:corr}
\end{equation}
where the pre-factor $\kappa(\rho)$ depends on the density $\rho$, and the variance $\sigma^2/2$ of the displacement of a single particle in unit time when it is diffusing freely in absence of the other particles
\begin{equation}
\kappa(\rho)=\frac{\sigma}{2\rho\sqrt{\pi}}.
\label{eq:curly D}
\end{equation}
The time dependence of the correlation \eref{eq:corr} in the annealed case has the same form as in a fractional Brownian motion \cite{Mandelbrot1968} corresponding to the Hurst exponent $H=1/4$.

The two time correlation function calculated for more general single file problems using the macroscopic fluctuation theory \cite{KMS_JSM} gives the same dependence on time as in \eref{eq:corr} and only the pre-factor $\kappa(\rho)$ depends on the specific details of the system.
Similar time dependence was found in a related model where the order of particles is preserved \cite{Rajesh2001}, and also in height fluctuations of a one-dimensional interface \cite{KrugKallabis1997}.

Our method can further be extended to analyse the joint probability distribution of tracer positions at an arbitrary number of times. By taking a continuous limit this leads to a path integral formulation of the probability of a trajectory of the tracer as a functional $P[x(t)]$ of the trajectory $x(t)$. (We use a notation $[\,]$ to denote a functional.) For an evolution in a time window $[0,t]$ this probability has a form analogous to \eref{eq:ldform} and \eref{eq:ldform two}:
\begin{equation}
P_{\alpha}\left[\xi(\tau)\sqrt{t}\right]\asymp e^{-\sqrt{t}\,\phi_{\alpha}[\xi(\tau)]}
\label{eq:ldform multi}
\end{equation}
where $0\le \tau \le 1$ is the time rescaled by $t$ and $\phi_{\alpha}[\xi(\tau)]$ is the large deviation functional. We shall show that the $\phi_{\alpha}[\xi(\tau)]$ can be formally expressed in terms of the solution of a Schr\"{o}dinger like equation with a moving time dependent step potential.

A direct consequence of \eref{eq:ldform multi} is that the joint probability distribution of the tracer position at multiple times is asymptotically a multivariate Gaussian distribution. This can be seen by a Taylor expansion of $\phi_{\alpha}[\xi(\tau)]$ around the minimum. This implies that all multi-time correlations of the tracer position can be expressed in terms two time correlation \eref{eq:corr} using the Wick's theorem (or the Isserlis' theorem).

Our analysis will be presented in the following order. In \sref{sec:the problem} we define the problem and outline our method of calculating the large deviation function. In \sref{sec:computation} we present an explicit calculation of the large deviation function $\phi_{a}(\xi)$ and $\phi_{q}(\xi)$. In \sref{sec:cumulant one} we use this result to calculate the cumulant generating function and derive expressions of the first few cumulants. In \sref{sec:two time} we extend our method to the two time case, and use this to derive \eref{eq:corr}. A further generalization to multi-time statistics of tracer position is presented in \sref{sec:continuous time}. We conclude in \sref{sec:conclusion} summarizing our results. In the Appendix we derive a relation between the distributions of the position of the tracer particle and of the time-integrated current in the quenched case. This relation is valid for arbitrary single file problem. It allows us to obtain the exact distribution of the displacement of tracer (through $M_{q}(\lambda,t)$) for the symmetric simple exclusion process at density $\rho=1/2$ in the quenched case.

\begin{figure}
\begin{center}
\includegraphics[width=0.5\textwidth]{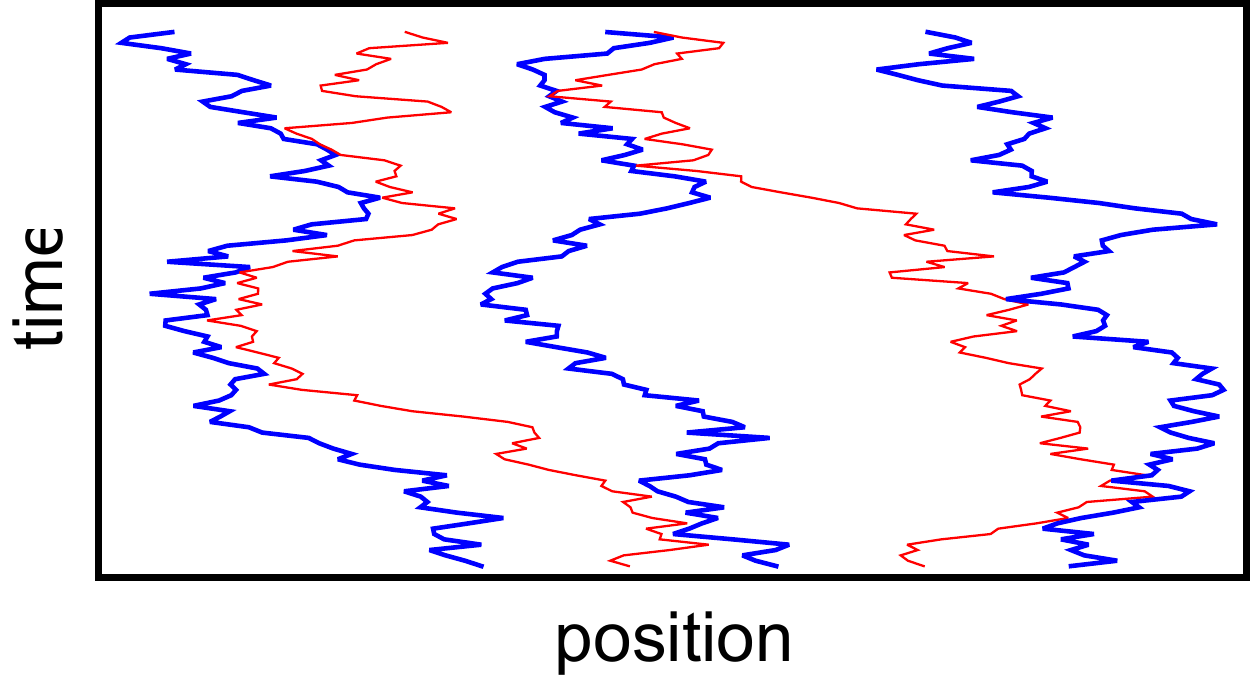}
\caption{A sample trajectory of five Brownian point particles with hard-core repulsion. \label{fig:fig1}}
\end{center}
\end{figure}

\section{The mapping of single file diffusion to non-interacting particles \label{sec:the problem}} We consider a system of Brownian point particles on an infinite line diffusing with hard core repulsion between particles. A sample trajectory of the particles is shown in \fref{fig:fig1}.

At times much smaller than the mean collision time the tracer diffuses normally. With increasing time, the displacement of the tracer is significantly confined by the presence of other particles and the motion becomes sub-diffusive. We analyse the tracer displacement in this large time limit.

\subsubsection*{Mapping to non-interacting particles:}
To derive an expression for the probability \eref{eq:ldform}, we use a mapping to a system of non-interacting particles where trajectories can cross each other. For simplicity we discuss the mapping for a system with a finite number of particles which are ranked as $\{\ldots,-2,-1,0,1,2,\ldots \}$ according to their positions. One can easily generalize the mapping to the case of infinitely many particles. In both the single file and the non-interacting systems we consider the same initial condition, with the zeroth rank particle at $t=0$ being  at the origin. In the single file system we choose the zeroth rank particle as the tracer and define $P(x(t))$ as the probability to find the tracer at position $x(t)$ at time $t$. In the non-interacting system we re-rank the particles at time $t$ according to their positions and define $P_{\textrm{\tiny NI}}(x(t))$ as the probability that the particle of rank zero is at position $x(t)$. These two probabilities are same \cite{Harris1965}:
\begin{equation}
P(x(t))=P_{\textrm{\tiny NI}}(x(t))
\label{eq:id1}
\end{equation}
irrespective of the choice of the initial state as long as it is the same for both the systems.

This equality is simple to argue from the fact that for every history of the non-interacting particles it is always possible to construct a valid history for the impenetrable particles, by simply interchanging the particle labels when two trajectories cross in the former case. As the particles are point particles, this relabelling does not change the probability of the history.

\subsubsection*{The large deviation function:}
In the limit of infinitely many particles distributed with uniform density $\rho$, the probability $P(x(t))$ has the large deviation form \eref{eq:ldform}, and due to \eref{eq:id1} the probability $P_{\textrm{\tiny NI}}(x(t))$ has also the same form. We now show how to calculate the large deviation function $\phi_{\alpha}(\xi)$ by analysing the large time asymptotics of $P_{\textrm{\tiny NI}}(x(t))$. 

To begin with the calculation we define $R_t(x)$ as the number of non-interacting particles which start at any position $\le 0$ and reach  a position $> x$ at time $t$. Similarly, $R_t^\prime(x)$ is the number of non-interacting particles which start at any position $>0$ and reach a position $\le x$ at time $t$. The quantities $R_t(x)$ and $R_t^\prime(x)$ are random variables and depend both on the stochastic evolution and the initial condition. Given that the zeroth rank particle at $t=0$ is at origin, we see that the cumulative probability of the zeroth rank particle at time $t$ to be at a position $x(t)>x$ is equal to the cumulative probability of the difference $R_t(x)-R_t^\prime(x)$ to be strictly positive.
\begin{equation}
	P_{\alpha}(x(t) > x)= P_{\alpha}\bigg[R_t(x)-R_t^\prime(x)\ge 1\bigg]\qquad \textrm{for any $x$,}
	\label{eq:prob R}
\end{equation}
where $\alpha$ denotes the quenched or the annealed initial condition.

As the particles are diffusive it is expected that the probability density of the difference $R_t(x )-R_t^\prime(x)$ has a large deviation form
\begin{equation}
	\mathbb{P}_{\alpha}\left[ \frac{R_t( \xi \sqrt{t})- R_t^\prime(\xi \sqrt{t})}{\sqrt{t}}=r \right]\asymp e^{-\sqrt{t}~\psi_{\alpha}(r,\,\xi)}
	\label{eq:ldform psi}
\end{equation}
where $\psi_{\alpha}(r,\,\xi)$ is the associated large deviation function. Then from \eref{eq:prob R} we are going to show that
\begin{equation}
\phi_{\alpha}(\xi)=\psi_{\alpha}(0,\xi)
	\label{eq:rel 2}
\end{equation}
for positive $\xi$. For negative values of $\xi$ one can use that the $\phi_{\alpha}(\xi)=\phi_{\alpha}(-\xi)$ which is expected as the microscopic dynamics is unbiased. The relation  \eref{eq:rel 2} applies for both settings of initial condition: annealed and quenched.

One way to derive the equality \eref{eq:rel 2} is by taking the limit
\begin{equation}
\lim_{t\rightarrow \infty}\frac{\ln P_{\alpha}(x(t)>\xi\sqrt{t})}{\sqrt{t}}=\lim_{t\rightarrow \infty}\frac{\ln \int_{\xi}^{\infty}dz \, e^{-\sqrt{t}\,\phi_{\alpha}(z)}}{\sqrt{t}}=-\phi_{\alpha}(\xi)
\label{eq:limit1}
\end{equation}
for positive $\xi$ where in the last step we have assumed that for positive $z$ the $\phi_{\alpha}(z)$ is minimum at $z=0$ and is monotonically increasing away from the minimum. Similarly we can show that
\begin{equation}
\lim_{t\rightarrow \infty}\frac{\ln P_{\alpha}\bigg(R_t( \xi \sqrt{t})- R_t^\prime(\xi \sqrt{t})>0\bigg)}{\sqrt{t}}=-\psi_{\alpha}(0,\xi)
\label{eq:limit2}
\end{equation}
where we assumed that $\psi_{\alpha}(r,\xi)$ given in \eref{eq:ldform psi} has minimum at a negative value of $r$ when $\xi$ is positive and the function is monotonically increasing away from the minimum. This is consistent with the property that at a position $x>0$, it is more probable that $R_t(x)$ is less than $R_t^{\prime}(x)$. Combination of the above two equations \eref{eq:limit1} and \eref{eq:limit2} along with \eref{eq:prob R} leads to the result \eref{eq:rel 2} for any positive $\xi$.

Rather than the large deviation function $\psi_{\alpha}(r,\xi)$ it is comparatively easier to calculate its Legendre transformation which gives the cumulant generating function of the difference $R_t( x)- R_t^\prime(x )$. Following the discussion in the introduction, the cumulant generating function can be defined in two settings. In the annealed setting, it is
\numparts
\begin{equation}
	\chi_{a}(B,\xi)=\lim_{t\rightarrow \infty} \frac{1}{\sqrt{t}}\ln \Bigg\langle e^{\>B\left(R_t( \xi \sqrt{t})- R_t^\prime(\xi \sqrt{t} )\right) }\Bigg\rangle_{\textrm{\tiny evolution+initial}}
	\label{eq:chi annealed}
\end{equation}
where $B$ is a fugacity parameter and the average is over all initial configurations of particles and their stochastic evolution. On the other hand in the quenched setting, it is defined as
\begin{equation}
	\chi_{q}(B,\xi)=\lim_{t\rightarrow \infty} \frac{1}{\sqrt{t}}\Bigg\langle\ln \bigg\langle e^{\>B\left(R_t( \xi \sqrt{t})- R_t^\prime(\xi \sqrt{t} )\right)} \bigg\rangle_{\textrm{\tiny evolution}}\Bigg\rangle_{\textrm{\tiny initial}}.
	\label{eq:chi quench}
\end{equation}
\endnumparts

In both settings, the cumulant generating function is related to the corresponding large deviation function by a Legendre transformation
\begin{equation}
	\chi_{\alpha}(B,\xi)=\max_{r}\{B~r-\psi_{\alpha}(r,\xi)\}.
\end{equation}
This leads to a parametric solution of $\psi_{\alpha}(r,\xi)$
\begin{equation}
\psi_{\alpha}(r,\xi)=B\,r-\chi_{\alpha}(B,\xi) \qquad \qquad \textrm{with}\qquad r=\frac{\partial\chi_{\alpha}(B,\xi)}{\partial B}.
\end{equation}
Setting $r=0$ and using the relation \eref{eq:rel 2} we get
\begin{equation}
\fl	\qquad \phi_{\alpha}(\xi)=-\chi_{\alpha}(B,\xi) \qquad \textrm{where $B$ is determined from}\qquad \frac{\partial\chi_{\alpha}(B,\xi)}{\partial B}=0.
	\label{eq:ldf saddle}
\end{equation}

So our approach consists of calculating $\chi_{\alpha}(B,\xi)$ defined in (\ref{eq:chi annealed},\,\ref{eq:chi quench}) and then to use \eref{eq:ldf saddle} to determine the large deviation function $\phi_{\alpha}(\xi)$ of the position of the tracer particle.

\section{Calculation of the large deviation function $\phi_{\alpha}(\xi)$ \label{sec:computation}}
We now calculate $\chi_{a}(B,\xi)$ and $\chi_{q}(B,\xi)$. A key remark is that the difference $R_{t}(x)-R_{t}^{\prime}(x)$ remains unchanged when one maps the single file problem to the non-interacting Brownian particles.
For non-interacting particles located initially at positions $y_j$, we can write using the independence of the particles
\begin{eqnarray}
\fl	\qquad \bigg\langle e^{B\big(R_t(x)-R_t^{\prime}(x)\big)} \bigg\rangle_{\textrm{\tiny evolution}}
&=&\bigg\langle e^{B\>R_t(x)}\bigg\rangle_{\textrm{\tiny evolution}} \bigg\langle e^{-B\>R_t^{\prime}(x)\big)} \bigg\rangle_{\textrm{\tiny evolution}}=\prod_{j}F_t(y_j,B,x)
\label{eq:split one}
\end{eqnarray} 
where the product is over all particles and $F_t(y,B,x)$ is the contribution of a single Brownian particle starting at $y$ given by
\begin{equation}
F_t(y,B,x)=\cases{\bigg\langle e^{B\>\Theta(z(t)-x)}\bigg\rangle_{z(0)=y} & for $y\le 0$\\
\bigg\langle e^{-B\>\Theta(x-z(t))}\bigg\rangle_{z(0)=y} & for $y>0$}
\label{eq:F}
\end{equation}
where $\Theta(x)$ is the Heaviside step function and $z(t)$ is the position of the Brownian particle at time $t$.

In this paper we consider that the initial positions of the particles are distributed according to a uniform density $\rho$.
The annealed and the quenched cases differ by how the average over the initial positions is taken in the generating function \eref{eq:chi annealed} and \eref{eq:chi quench}.
In the annealed case, the average over initial positions is inside the logarithm. We write the average
\begin{equation}
 \Bigg\langle e^{B\big(R_t(x)-R_t^{\prime}(x)\big)} \Bigg\rangle_{\textrm{\tiny evolution+initial}}=\prod_{y}\Bigg[\bigg(1-\rho\>dy\bigg)+\rho\>dy\>F_t(y,B,x)\Bigg]
\end{equation}
where each infinitesimal interval $dy$ is occupied by a single particle with probability $\rho \,dy$ or empty with probability $1-\rho\, dy$. Taking logarithm on both sides leads to
\begin{equation}
\ln \Bigg\langle e^{B\big(R_t(x)-R_t^{\prime}(x)\big)} \Bigg\rangle_{\textrm{\tiny evolution+initial}}=\rho\int_{-\infty}^{\infty}dy\>\bigg[ F_t(y,B,x)-1\bigg].
\end{equation}
On the other hand, in the quenched case, the average over initial positions is outside logarithm. We get
\begin{equation}
\Bigg\langle\ln\bigg\langle e^{B\big(R_t(x)-R_t^{\prime}(x)\big)} \bigg\rangle_{\textrm{\tiny evolution}}\Bigg \rangle_{\textrm{\tiny initial}}=\rho\int_{-\infty}^{\infty}dy\>\ln F_t(y,B,x).
\end{equation}
Substituting this in the formula \eref{eq:chi annealed} and \eref{eq:chi quench} we get
\numparts
\begin{eqnarray}
\chi_{a}(B,\xi)&=&\rho\>\lim_{t\rightarrow\infty}\int_{-\infty}^{\infty}d\eta \>\Big[F_t\left(\eta\sqrt{t},B,\xi\sqrt{t}\right)-1\Big] \label{eq:chi a 2}\\
\chi_{q}(B,\xi)&=&\rho\>\lim_{t\rightarrow\infty}\int_{-\infty}^{\infty}d\eta \>\ln F_t\left(\eta\sqrt{t},B,\xi\sqrt{t}\right) \label{eq:chi q 2}
\end{eqnarray}
\endnumparts
where we defined $\eta=y\, t^{-\frac{1}{2}}$ and $\xi=x\, t^{-\frac{1}{2}}$.

For an explicit calculation of the function $F_t(y,B,x)$ we define a diffusion propagator for a Brownian particle starting at $y$:
\begin{equation}
g(z,t \vert y,0)=\frac{1}{\sqrt{\pi \sigma^2 t}}\exp\left[ -\frac{(z-y)^2}{\sigma^2 t} \right]
\label{eq:g}
\end{equation}
where $\sigma^2/2$ is the variance of the displacement of the particle in unit time. Using this $g(z,t \vert y,0)$ we calculate the averages in \eref{eq:F} and find that $F_{t}(y,B,x)$ has a scaling form
\begin{equation}
F_t(\eta\sqrt{t},B,\xi\sqrt{t})=\cases{ f(-\eta,B,\xi) & for $\eta\le 0$\\
f(\eta,-B,-\xi) & for $\eta>0$}
\label{eq:single particle}
\end{equation}
where the scaling function $f(\eta,B,\xi)$ is given by
\begin{equation}
f(\eta,B,\xi)=1+ \frac{\left( e^{B}-1\right)}{2}\erfc{\frac{\xi+\eta}{\sigma}}
\label{eq:f}
\end{equation}
with the $\erfc{x}$ being the complementary error function
\begin{equation}
\erfc{x}=\frac{2}{\sqrt{\pi}}\int_{x}^{\infty}du\> e^{-u^2}.
\end{equation}
Substituting this in (\ref{eq:chi a 2},~\ref{eq:chi q 2}) we get
\numparts
\begin{eqnarray}
\fl \qquad \chi_{a}(B,\xi)&=&\rho \>\int_{0}^{\infty}d\eta\, \bigg\{f(\eta,B,\xi)-1\bigg\}+\rho\>\int_{0}^{\infty}d\eta\, \bigg\{f(\eta,-B,-\xi)-1\bigg\} \label{eq:chi a 3}\\
\fl \qquad \chi_{q}(B,\xi)&=&\rho\int_{0}^{\infty}d\eta\,\ln f(\eta,B,\xi)+\rho\int_{0}^{\infty}d\eta\,\ln f(\eta,-B,-\xi). \label{eq:chi q 3}
\end{eqnarray}
\endnumparts

\subsection*{Large deviation function}
\subsubsection*{Annealed:} Using the expression \eref{eq:chi a 3} into \eref{eq:ldf saddle} leads to a parametric formula of the large deviation function 
\numparts
\begin{equation}
\fl \qquad \phi_{a}(\xi)=-\rho\>\frac{\left(e^B-1\right)}{2}\int_{\xi}^{\infty}d\eta \,\erfc{\frac{\eta}{\sigma}}-\rho\>\frac{\left(e^{-B}-1\right)}{2}\int_{-\xi}^{\infty}d\eta \,\erfc{\frac{\eta}{\sigma}}
\label{eq:phia one}
\end{equation}
where the parameter $B$ is determined from the relation
\begin{equation}
\frac{\partial \phi_{a}(\xi)}{\partial B}=0.
\label{eq:dphidB}
\end{equation}
\endnumparts

In this case, the $B$ dependence can be eliminated from the formula. To show this we explicitly write the equation \eref{eq:dphidB} using the expression of $\phi_{a}(\xi)$ in \eref{eq:phia one}  which leads to
\begin{equation}
	e^{B}=\left[\int_{-\xi}^{\infty}\,d\eta\, \erfc{\frac{\eta}{\sigma}}\right]^{1/2}\left[\int_{\xi}^{\infty}\,d\eta \,\erfc{\frac{\eta}{\sigma}}\right]^{-1/2}.
\end{equation}
Substituting this in the expression \eref{eq:phia one} we get an explicit formula
\begin{equation}
	\phi_{a}(\xi)=\frac{\rho}{2}\>\left[ \sqrt{\int_{\xi}^{\infty}d\eta \,\erfc{\frac{\eta}{\sigma}}}-\sqrt{\int_{-\xi}^{\infty}d\eta \,\erfc{\frac{\eta}{\sigma}}}\,\right]^2.
	\label{eq:ldf annealed}
\end{equation}
The same result was obtained previously using the macroscopic fluctuation theory \cite{KMS2014,KMS_JSP}, and also by solving microscopic dynamics \cite{Hegde2014,KMS_JSP}. 

\subsubsection*{Quenched:} Using the expression \eref{eq:chi q 3} into \eref{eq:ldf saddle} we get
\numparts
\begin{equation}
\fl \phi_{q}(\xi)=-\rho\int_{0}^{\infty}d\eta\,\ln\left\{\left[1+ \frac{\left( e^{B}-1\right)}{2}\erfc{\frac{\eta+\xi}{\sigma}} \right]\left[1+ \frac{\left( e^{-B}-1\right)}{2}\erfc{\frac{\eta-\xi}{\sigma}} \right]\right\}
\label{eq:phi one}
\end{equation}
with the parameter $B$ determined in terms of $\xi$ by solving
\begin{equation}
	\frac{\partial \phi_{q}(\xi)}{\partial B}=0.
	\label{eq:B def}
\end{equation}
\endnumparts

Unlike the annealed case, there is no explicit formula for $\phi_q(\xi)$. We write the solution in an alternative form which will be useful later.
For this we first rewrite the expression of $\phi_q(\xi)$ in \eref{eq:phi one} as
\begin{eqnarray*}
\fl	\qquad \phi_{q}(\xi)=\rho\>\int_{0}^{\xi}d\eta ~\ln\left[\frac{1+ \left(e^{B}-1\right) \frac{1}{2}\erfc{\frac{\eta}{\sigma}}}{1+ \left(e^{-B}-1\right) \frac{1}{2}\erfc{ \frac{-\eta}{\sigma}}} \right]\nonumber\\
  -\rho\>\int_{0}^{\infty}d\eta ~\ln\left\{\Bigg[1+ \frac{\left( e^{B}-1\right)}{2}\erfc{\frac{\eta}{\sigma}}\Bigg]\Bigg[1+ \frac{\left( e^{-B}-1\right)}{2}\erfc{\frac{\eta}{\sigma}}\Bigg] \right\}.
\end{eqnarray*}
We simplify the expression using an identity  $\erfc{x}+\erfc{-x}=2$ which leads to
\numparts
\begin{equation}
		\phi_{q}(\xi)=\rho\>B~\xi- \rho\>\sigma\>\int_{0}^{\infty}d\eta \ln\Bigg[1+\sinh^2\left(\frac{B}{2}\right)\erfc{\eta}\erfc{-\eta} \Bigg]
	\label{eq:phi two}
\end{equation}
where we made a change of variable $\eta\rightarrow \sigma\>\eta$.
With $B$ given by \eref{eq:B def} yields
\begin{equation}
	\xi=\sigma\>\frac{d}{dB}\int_{0}^{\infty}d\eta \ln\Bigg[1+\sinh^2\left(\frac{B}{2}\right)\erfc{\eta}\erfc{-\eta} \Bigg].
	\label{eq:B def two}
\end{equation}
\endnumparts
The same formula was obtained previously using the macroscopic fluctuation theory \cite{KMS2014,KMS_JSP}, and solving microscopic dynamics \cite{KMS_JSP}.

\subsection*{Comparison of the two cases \label{sec:results}}
We see that the large deviation function in both settings of initial averaging is non-Gaussian. Moreover, at large values of $\xi$, the large deviation function has different asymptotics in the two settings, reflecting the sensitivity to the initial state, even at large times. To derive these asymptotics we define 
\begin{equation}
h(\xi)=\int_{\xi}^{\infty}d\eta~\erfc{\eta}.
\label{eq:alpha}
\end{equation}
In terms of this function the formula \eref{eq:ldf annealed} becomes
\begin{equation}
	\phi_{a}(\xi)=\rho\,\frac{\sigma}{2}\>\left[ \sqrt{h\left(\xi/\sigma\right)}-\sqrt{h\left(-\xi/\sigma\right)}\,\right]^2.
	\label{eq:ldf annealed alpha}
\end{equation}
For large positive $\xi$, the $h(\xi)$ can be expanded as
\begin{equation}
\fl \qquad h(\xi)=\exp\left(-\xi^2\right)\Bigg[\frac{1}{2\sqrt{\pi}~\xi^2}+\Or(\xi^{-4})\Bigg]\qquad \textrm{and}\qquad 	h(-\xi)=2\,\xi+h(\xi).
\end{equation}
Substituting this in the explicit formula \eref{eq:ldf annealed alpha} we get
\begin{equation}
	\phi_{a}(\xi)\simeq \rho \> \vert \xi \vert \qquad \qquad \textrm{for large $\xi$}.
	\label{eq:asymp anneal}
\end{equation}

For the quenched case, the formula of $\phi_q(\xi)$ is in an implicit form (\ref{eq:phi two},\,\ref{eq:B def two}). Using the formula \eref{eq:B def two} we see that large positive $\xi$ corresponds to large positive $B$. At large $B$ the integral in \eref{eq:phi two} grows as $\frac{2}{3}B^{3/2}$. Substituting this in the parametric formula, leads to
\begin{equation}
	\phi_{q}(\xi)\simeq \frac{\rho}{3\,\sigma^2} \, \vert \xi\vert^3  \qquad \qquad \textrm{for large $\xi$}
		\label{eq:asymp quench}
\end{equation}
where for negative values of $\xi$ we used the symmetry $\phi_{q}(-\xi)=\phi_{q}(\xi)$. Both asymptotics were reported earlier in \cite{KMS2014,KMS_JSP} using a macroscopic approach.

The asymptotic behaviours \eref{eq:asymp anneal} and \eref{eq:asymp quench} can be explained by a simple picture. For the annealed case, the probability of a very large displacement $x(t)$ of the tracer is dominated by the contribution from initial configurations where the interval $(0,x(t))$ is empty. The cost for the tagged particle to move a distance $x(t)$ is then negligible compared to the cost of having the whole range $(0,x(t))$ empty in the initial condition. The probability of such an initial state is
\begin{equation}
\mathcal{P}=\lim_{dx\rightarrow 0}\bigg(1-\rho\>dx\bigg)^{\frac{x(t)}{dx}}= e^{-\rho\>x(t)}.
\end{equation}
Defining $\xi=x(t)t^{-\frac{1}{2}}$ and comparing with \eref{eq:ldform} we get \eref{eq:asymp anneal}.

For the quenched case, where the initial density profile in uniform, the probability of large $x(t)$ is dominated by the events where all particles inside the interval $(0,x(t))$ have reached $x(t)$ at time $t$. The probability of such events, for large $x(t)$, is
\begin{equation}
\widetilde{\mathcal{P}}\simeq\prod_{j=0}^{x(t)\rho}g(x(t),t\vert \rho^{-1}j,0)\sim \exp\left(-\frac{\rho \>x(t)^3}{3\,\sigma^2 t}\right)
\end{equation}
where $g(x(t),t\vert y,0)$ is the diffusion propagator \eref{eq:g}. Using $\xi=x(t)t^{-\frac{1}{2}}$ and comparing with \eref{eq:ldform} we get \eref{eq:asymp quench}.

\section{Cumulants of the position of the tracer \label{sec:cumulant one}}
The cumulant generating function is related to the large deviation function by a Legendre transformation \eref{eq:legendre}.
Using the results (\ref{eq:phi two},\,\ref{eq:B def two}) and \eref{eq:ldf annealed alpha} one can derive expressions for the cumulant generating functions in both the quenched and the annealed settings.

\subsection*{Annealed}
Taking the Legendre transformation \eref{eq:legendre} with the expression of $\phi_{a}(\xi)$ in \eref{eq:ldf annealed alpha} yields a parametric formula of the cumulant generating function
\numparts
\begin{equation}
\fl \quad M_{a}(\lambda,t)=\sqrt{t}~\frac{\sigma\rho}{4}\,\bigg(\sqrt{h(\xi)}-\sqrt{h(-\xi)}\bigg)^2\left[\erfc{\xi}\sqrt{\frac{h(-\xi)}{h(\xi)}}+\erfc{-\xi}\sqrt{\frac{h(\xi)}{h(-\xi)}}\>\right]
\label{eq:Ma one}
\end{equation}
with $\xi$ determined in terms of $\lambda$ by solving
\begin{equation}
\fl \qquad	\lambda=\frac{\rho}{2}\, \bigg(\sqrt{h(-\xi)}-\sqrt{h(\xi)}\bigg)\left[\erfc{\xi}\sqrt{\frac{1}{h(\xi)}}+\erfc{-\xi}\sqrt{\frac{1}{h(-\xi)}}\,\right]
\label{eq:Ma two}
\end{equation}
\endnumparts
where the $h(\xi)$ is defined in \eref{eq:alpha}. In deriving the above formula we made a change of variables $\xi\rightarrow\sigma \>\xi$ in \eref{eq:ldf annealed alpha} and used identities $h(-\xi)-h(\xi)=2\xi$ and $\erfc{\xi}+\erfc{-\xi}=2$. The formula is simpler but equivalent to the parametric formula reported in \cite{KMS2014,KMS_JSP,Hegde2014}.

The cumulants of the tracer position $x(t)$ can be extracted using \eref{eq:cumulant expansion}. For this we first expand the expression of $M_{a}(\lambda,t)$ in \eref{eq:Ma one} in powers of $\xi$ and then use the second formula \eref{eq:Ma two} to expand $\xi$ in powers of $\lambda$. This can be implemented in Mathematica. The first non-zero cumulants obtained this way are
\numparts
\begin{eqnarray}
\bigg\langle x(t)^2 \bigg\rangle_a &=&\frac{\sigma}{\rho\sqrt{\pi}}\sqrt{t}\simeq\frac{0.56419}{\rho}\>\sigma\sqrt{t} \label{eq:variance annealed}\\
\bigg\langle x(t)^4 \bigg\rangle_a &=&\frac{\sigma}{\rho^{3}}\left[\frac{4}{\pi}-1\right]\frac{3}{\sqrt{\pi}}\sqrt{t}\simeq\frac{0.46247}{\rho^3}\>\sigma\sqrt{t} \label{eq:variance annealed b}\\
\bigg\langle x(t)^6 \bigg\rangle_a &=&\frac{\sigma}{\rho^{5}}\left[\frac{68}{\pi^2}-\frac{30}{\pi}+3\right]\frac{15}{\sqrt{\pi}}\sqrt{t}\simeq\frac{2.88197}{\rho^5}\>\sigma\sqrt{t}. \label{eq:variance annealed c}
\end{eqnarray}
\endnumparts

\subsection*{Quenched} 
Using \eref{eq:legendre} one can show that the functions $\phi_{q}(\xi)$ and $M_q(\lambda,t)$ are related by
\begin{equation}
	\phi_{q}(\xi)=\lambda ~\xi - \frac{M_q(\lambda,t)}{\sqrt{t}} \qquad \textrm{with}\qquad \xi=\frac{1}{\sqrt{t}}~\frac{\partial M_q(\lambda,t)}{\partial \lambda}.
\end{equation}
By simply comparing the above with the parametric expression (\ref{eq:phi two},\,\ref{eq:B def two}) one can get an explicit formula for the cumulant generating function
\begin{equation}
	M_{q}(\lambda,t)=\sqrt{t}\,\sigma ~\rho\int_{0}^{\infty}d\eta \ln\Bigg[1+\sinh^2\left(\frac{\lambda}{2\rho}\right)\erfc{\eta}\erfc{-\eta} \Bigg].
	\label{eq:mu one q final}
\end{equation}
This is an even function of $\lambda$ which is consistent with the fact that the microscopic dynamics is unbiased.

Cumulants are determined from $M_{q}(\lambda,t)$ using \eref{eq:cumulant expansion}. As the $M_{q}(\lambda,t)$ is an even function of $\lambda$ all the odd cumulants vanish. The first non-zero cumulants are
\numparts
\begin{eqnarray}
\fl \bigg\langle x(t)^2 \bigg\rangle_{q}&=&\frac{\sigma\sqrt{t}}{2\rho}\int_{0}^{\infty}d\eta ~ \erfc{\eta}\erfc{-\eta}=\frac{\sigma\sqrt{t}}{\rho\sqrt{2\pi}}\simeq\frac{0.39894}{\rho}\>\sigma\sqrt{t}\\
\fl \bigg\langle x(t)^4 \bigg\rangle_q&=&\frac{\sigma\sqrt{t}}{2\rho^3}\int_{0}^{\infty}d\eta ~ \left[1-\frac{3}{2}\erfc{\eta}\erfc{-\eta}\right]\erfc{\eta}\erfc{-\eta}\simeq\frac{-0.02109}{\rho^3}\>\sigma\sqrt{t}\\
\fl\bigg\langle x(t)^6 \bigg\rangle_q &=& \frac{\sigma\sqrt{t}}{2\rho^5}\int_{0}^{\infty}d\eta ~ \left\{1-\frac{15}{2}\bigg[1- \erfc{\eta}\erfc{-\eta}\bigg]\erfc{\eta}\erfc{-\eta}\right\}\erfc{\eta}\erfc{-\eta} \nonumber \\
\fl && \simeq \frac{0.00893}{\rho^5}\>\sigma\sqrt{t}.
\end{eqnarray}
\endnumparts

These expressions differ from their corresponding values \eref{eq:variance annealed}-\eref{eq:variance annealed c} in the annealed case. In particular the variances differ by a factor $\sqrt{2}$. This factor $\sqrt{2}$ between the two variances is in fact present in more general single file systems \cite{KMS2014,KMS_JSP}.

\begin{remark}
The expression \eref{eq:mu one q final} is very similar to the cumulant generating function of the time-integrated current passed through the origin \cite{Gerschenfeld2009}. In fact, one can show that for a \textit{general} single file problem in the quenched setting there is always a simple relation between these two generating functions.
We discuss this relation in \ref{app:current}. There is no such relation in the annealed setting.

As the generating function of the current is known  \cite{Gerschenfeld2009Bethe,Gerschenfeld2009} in the quenched case for the symmetric simple exclusion process (SSEP) at density $1/2$ we get the following expression for the cumulant generating function of the tracer position
\begin{equation}
\fl \qquad \qquad M_{q}(\lambda,t)=\frac{\sqrt{t}}{\pi\sqrt{2}}\int_{-\infty}^{\infty}dk\,\ln\bigg(1+\sinh^2(\lambda)e^{-k^2}\bigg)\qquad \textrm{at $\rho=\frac{1}{2}$ for SSEP}.
\label{eq:gen fnc for ssep}
\end{equation}
More details are presented in \ref{app:current}. One can easily derive the variance by expanding upto second order in powers of $\lambda$ and check that the expression is consistent with the known \cite{KMS2014,KMS_JSP} result of variance for SSEP.

\end{remark}

\section{Statistics of tracer positions at two times \label{sec:two time}}
Our method can be extended to calculate the joint probability distribution of the tracer position at multiple times. As a simple example we discuss the two time case where the joint probability has a large deviation form \eref{eq:ldform two}.

To calculate the large deviation function $\phi_{\alpha}(\xi(\tau_1),\xi(\tau_2))$ we  generalize (\ref{eq:chi annealed},\,\ref{eq:chi quench}) and define two time cumulant generating functions
\numparts
\begin{eqnarray}
 \chi_{a}(\bi{B},\bx)&=&\lim_{t\rightarrow \infty}\frac{1}{\sqrt{t}}\ln\left\langle e^{\sum_{j=1}^{2}B_j\left[R_{t_j}(x_j) -R^{\prime}_{t_j}(x_j)\right]}\right\rangle_{\textrm{\tiny evolution + initial}} \label{eq:chi two a}\\
 \chi_{q}(\bi{B},\bx)&=&\lim_{t\rightarrow \infty}\frac{1}{\sqrt{t}}\left\langle\ln\left\langle e^{\sum_{j=1}^{2}B_j\left[R_{t_j}(x_j) -R^{\prime}_{t_j}(x_j)\right]}\right\rangle_{\textrm{\tiny evolution }}\right\rangle_{\textrm{\tiny initial}} \label{eq:chi two q}
\end{eqnarray}
\endnumparts
where $t_{j}=\tau_{j}t$, and $x_{j}=\xi(\tau_{j})\sqrt{t}$. For convenience, we defined a shorthand notation $\bx\equiv\{\xi(\tau_1),\xi(\tau_2)\}$ and $\mathbf{B}\equiv\{B_1,B_2\}$. The $R_t(x)$ and $R^{\prime}_t(x)$ are defined in \sref{sec:the problem}. Generalising the arguments of \sref{sec:the problem} one can show that the large deviation function $\phi_{\alpha}(\xi(\tau_1),\xi(\tau_2))$ has a parametric solution analogous to \eref{eq:ldf saddle}:
\numparts
\begin{equation}
\phi_{\alpha}(\xi(\tau_1),\xi(\tau_2))=-\chi_{\alpha}(\bi{B},\bx)
\label{eq:ldf two formal 1}
\end{equation}
with the $\bi{B}$ determined in terms of $\bx$ by solving
\begin{equation}
\frac{\partial \chi_{\alpha}}{\partial B_j}=0 \qquad \qquad \textrm{for }j=1,2. 
\label{eq:ldf two formal 2}
\end{equation}
\endnumparts
Then the calculation of the large deviation function reduces to determine $\chi_{\alpha}(\bi{B},\bx)$ and use (\ref{eq:ldf two formal 1},\,\ref{eq:ldf two formal 2}).

An explicit solution requires to calculate the multi-particle averages in (\ref{eq:chi two a},\,\ref{eq:chi two q}). These averages can be expressed in terms of the contribution from a single Brownian particle. This leads to formulas analogous to (\ref{eq:chi a 2},\,\ref{eq:chi q 2}):
\numparts
\begin{eqnarray}
 \chi_{a}(\bi{B},\bx)&=&\rho \lim_{t\rightarrow \infty}\int_{-\infty}^{\infty}d\eta \bigg[ F_{t}\left(\eta\sqrt{t},\bi{B},\bx\sqrt{t}\right) -1 \bigg] \label{eq:chi two final a}\\
 \chi_{q}(\bi{B},\bx)&=&\rho \lim_{t\rightarrow \infty}\int_{-\infty}^{\infty}d\eta \ln\bigg[ F_{t}\left(\eta\sqrt{t},\bi{B},\bx\sqrt{t}\right)\bigg] \label{eq:chi two final q}
\end{eqnarray}
\endnumparts
with $F_{t}(y,\bi{B},\bi{x})$ given by
\begin{equation}
 F_{t}(y,\bi{B},\bi{x})=\cases{\bigg\langle e^{\sum_{j=1}^{2}B_j\>\Theta\left(z(t_j)-x_j\right)}\bigg\rangle_{z(0)=y} & for $y\le 0$\\
\bigg\langle e^{-\sum_{j=1}^{2}B_j\Theta\left(x_j-z(t_j)\right)}\bigg\rangle_{z(0)=y} & for $y>0$}
\label{eq:F two}
\end{equation}
where the average is over a Brownian particle $z(t)$ which started at $z(0)=y$. We used a notation $\bi{x}=\{x_1,x_2\}$.

It is straightforward to calculate the single particle averages in \eref{eq:F two} and we present only the final result. Similar to \eref{eq:single particle} the $F_{t}(y,\bi{B},\bi{x})$ has a scaling form
\begin{equation}
 F_{t}(\eta\sqrt{t},\bi{B},\bx\sqrt{t})=\cases{f(-\eta,\bi{B},\bx) & for $\eta\le 0$\\
f(\eta,-\bi{B},-\bx) & for $\eta > 0$}
\label{eq:F in h}
\end{equation}
where the right hand side does not depend on $t$ and the scaling function $f(\eta,\bi{B},\bx)$ has an expression
\begin{eqnarray}
\fl f(\eta,\bi{B},\bx)=1+\frac{\left( e^{B_1}-1\right)}{2}\erfc{\frac{\xi(\tau_1)+\eta}{\sigma\sqrt{\tau_1}}}+\frac{\left( e^{B_2}-1\right)}{2}\erfc{\frac{\xi(\tau_2)+\eta}{\sigma\sqrt{\tau_2}}}\nonumber\\
\fl \qquad \qquad +\left( e^{B_1}-1\right)\left( e^{B_2}-1\right)\>\frac{1}{\pi}\>\int_{\xi(\tau_1)}^{\infty}du_1\int_{\xi(\tau_2)}^{\infty}du_2\>\frac{\exp\left[ -\frac{(u_2-u_1)^2}{\sigma^2(\tau_2-\tau_1)}-\frac{(u_1+\eta)^2}{\sigma^2\tau_1}\right]}{\sigma^2\sqrt{\tau_1(\tau_2-\tau_1)}}.
\label{eq:h}
\end{eqnarray}
Substituting \eref{eq:F in h} in (\ref{eq:chi two final a},\,\ref{eq:chi two final q}) and then using (\ref{eq:ldf two formal 1},\,\ref{eq:ldf two formal 2}) we get a parametric solution of the large deviation function:

\subsubsection*{For annealed case:}
\numparts
\begin{eqnarray}
\fl ~~ \phi_{a}(\xi(\tau_1),\xi(\tau_2))=-\rho \int_{0}^{\infty}d\eta \bigg\{ f(\eta,\bi{B},\bx)-1\bigg\}-\rho \int_{0}^{\infty}d\eta \bigg\{ f(\eta,-\bi{B},-\bx)-1\bigg\}
\label{eq:phi a final 1}
\end{eqnarray}
with $B_{1}$ and $B_{2}$ determined by solving 
\begin{equation}
\frac{\partial\phi_{a}}{\partial B_{j}}=0 \qquad \textrm{for $j=1,2$}.
\label{eq:phi a final 2}
\end{equation}
\endnumparts
\subsubsection*{For quenched case:} 
\numparts
\begin{eqnarray}
\fl ~~ \phi_{q}(\xi(\tau_1),\xi(\tau_2))=-\rho \int_{0}^{\infty}d\eta \> \ln f(\eta,\bi{B},\bx) -\rho \int_{0}^{\infty}d\eta \> \ln f(\eta,-\bi{B},-\bx)
\label{eq:phi q final 1}
\end{eqnarray}
with $B_{1}$ and $B_{2}$ determined from relations
\begin{equation}
\frac{\partial\phi_{q}}{\partial B_{j}}=0 \qquad \textrm{for $j=1,2$}.
\label{eq:phi q final 2}
\end{equation}
\endnumparts

\subsection*{Two time correlation \label{sec:x1x2}}
The Legendre transformation of $\phi_{\alpha}(\xi(\tau_1),\xi(\tau_2))$ contains the information on correlations of all order between the tracer positions at two times. Generalizing \eref{eq:legendre} we define the Legendre transformation
\begin{equation}
\fl \qquad \qquad M_{\alpha}(\lambda_1,\tau_1t;\lambda_2,\tau_2t)=\sqrt{t}\>\max_{\xi(\tau_1),\xi(\tau_2)}\{\lambda_1\xi(\tau_1)+\lambda_2\xi(\tau_2)-\phi_{\alpha}(\xi(\tau_1),\xi(\tau_2))\}.
\label{eq:legendre two}
\end{equation}
Analogous to \eref{eq:cumulant expansion} one can calculate the multi-time cumulants of arbitrary order by using
\begin{equation}
\bigg\langle x(t_1)^k \>x(t_2)^{\ell} \bigg\rangle_{\alpha}=\frac{d^{k+\ell}M_{\alpha}(\lambda_1,t_1;\lambda_2,t_2)}{d\lambda_1^{k}\>d\lambda_2^{\ell}}\Bigg\vert_{(\lambda_{1},\lambda_2)=(0,0)}
\label{eq:multi cumulant}
\end{equation}
for non-negative integers $k$ and $\ell$. Using the solutions (\ref{eq:phi a final 1},\,\ref{eq:phi a final 2}) and (\ref{eq:phi q final 1},\,\ref{eq:phi q final 2}) one can then calculate the two time correlation $\langle x(t_1) x(t_2)\rangle_{\alpha}$ and obtain the results \eref{eq:corr}. (This only requires to calculate the above expressions upto quadratic order in the variables $\xi$ and $B$.)

\section{Probability of a trajectory of the tracer \label{sec:continuous time}}
An interesting generalization would be to calculate the probability weight of an entire trajectory of the tracer position. For the Brownian motion this probability can be written as $\exp(-A[X])$ where $A[X]$ is a functional of the Brownian trajectory $X(t)$ which has the form of the classical action of a free particle \cite{Kac1949,Satya2005}. We want to see how to generalize this action for the single file problem. The probability of a trajectory of the tracer has an exponential form as given in \eref{eq:ldform multi}. One can try to calculate the large deviation functional $\phi_{\alpha}[\xi]$ by generalizing the method described in \sref{sec:two time}.

We begin with a generalization of \eref{eq:F two} and define a functional
\begin{equation}
 F_t[y,B,x]=\cases{\bigg\langle e^{\int_{0}^{t}dt^{\prime} B(t^{\prime})\>\Theta\left(z(t^{\prime})-x(t^{\prime})\right)}\bigg\rangle_{z(0)=y} & for $y\le 0$\\
\bigg\langle e^{-\int_{0}^{t}dt^{\prime} B(t^{\prime})\>\Theta\left(x(t^{\prime})-z(t^{\prime})\right)}\bigg\rangle_{z(0)=y} & for $y>0$}
\label{eq:F multi}
\end{equation}
where $z(t^{\prime})$ is the trajectory of a Brownian particle in a time window $[0,t]$.

This quantity $F_t[y,B,x]$ has a simple scaling with time as in \eref{eq:F in h}. We define $\eta=t^{-\frac{1}{2}}y$, $\xi(\tau)=t^{-\frac{1}{2}}x(\tau t)$, and $b(\tau)=t\,B(\tau t)$ and denote $0\le \tau \le 1$ as the rescaled time. In terms of these variables we find a scaling functional $f[\eta,b,\xi]$ such that
\begin{equation}
 F_t\left[y,B,x\right]=\cases{f\left[-\eta,b,\xi\right] & for $\eta\le 0$\\
f[\eta,-b,-\xi] & for $\eta > 0$.}
\label{eq:F in h multi}
\end{equation}

Generalizing the line of arguments in the two time case we find that $\phi_{\alpha}[\xi]$ has a parametric solution in terms of $f[\eta,b,\xi]$:
\begin{equation}
\fl \textrm{\textit{Annealed}:} \qquad \phi_{a}\left[\xi \right]=-\rho \int_{0}^{\infty}d\eta \bigg\{ f[\eta,b,\xi]-1\bigg\}-\rho \int_{0}^{\infty}d\eta \bigg\{ f[\eta,-b,-\xi]-1\bigg\}
\label{eq:phi a final 1 multi}
\end{equation}
\begin{equation}
\fl \textrm{\textit{Quenched}:}  \qquad  \phi_{q}\left[\xi \right]=-\rho \int_{0}^{\infty}d\eta \ln f[\eta,b,\xi]-\rho \int_{0}^{\infty}d\eta \ln f[\eta,-b,-\xi]
\label{eq:phi q final 1 multi}
\end{equation}
where in both cases the corresponding functions $b(\tau)$ are determined from a functional derivative
\begin{equation}
\frac{\delta\phi_{\alpha}\left[\xi \right]}{\delta b}=0
\label{eq:phi a final 2 multi}
\end{equation}
with $\alpha$ denoting annealed or quenched.

Unlike the earlier examples, there is no closed form solution for $f[\eta,b,\xi]$. We can nevertheless write, by using the definition (\ref{eq:F multi},\,\ref{eq:F in h multi}), the solution as a series in powers of $b(\tau)$, leading to
\begin{eqnarray}
  f[\eta,b,\xi]=1+\sum_{k=1}^{\infty}\int_{0}^{1}d\tau_1\cdots\int_{0}^{\tau_{k-1}}d\tau_k \,b(\tau_1)\cdots b(\tau_k) \, W_k
  \label{eq:f series}
\end{eqnarray}
where $W_k$ is given by
\begin{equation}
W_k=\int_{\xi(\tau_1)}^{\infty}du_1\cdots \int_{\xi(\tau_k)}^{\infty}du_k\, g(u_1,\tau_1\vert u_2,\tau_2)\cdots g(u_k,\tau_k\vert -\eta,0).
\label{eq:R}
\end{equation}
Substituting this solution in (\ref{eq:phi a final 1 multi},\,\ref{eq:phi q final 1 multi}) one gets the same correlations as \eref{eq:corr} at the lowest non-trivial order.

In the \ref{app:sch} we take an alternative approach and show that $f[\eta,b,\xi]$ can be expressed in terms of the solution of a Schr\"{o}dinger like equation with a time dependent moving step potential. We did not find a way of solving this equation other than a perturbation expansion which also leads to (\ref{eq:f series},\,\ref{eq:R}).

In \cite{KMS_JSM}, the problem was approached using the macroscopic fluctuation theory. It was found that the Legendre transformation of $\phi_{\alpha}[\xi]$ can be expressed formally as a minimum action of a variational problem, but no explicit solution was derived.

\section{Summary \label{sec:conclusion}}
We have described here a simple method to analyse the large time statistics of the tracer displacement in a system of Brownian point particles with hard core repulsion. Using this method we reproduced the known results for the probability of tracer position at one time and also derived new results for the joint probability distribution of tracer position at multiple times. In particular, we derived a parametric formula for the large deviation function of the tracer position at two times and used it to calculate the two time correlation \eref{eq:corr}. The results depend on the choice of averaging over the initial state, specifically the annealed and the quenched setting. 
In a further generalization of our method we took a continuous limit and obtained a formal solution of the action which characterizes probability of an entire trajectory of the tracer in terms of the solution of a time dependent Schr\"{o}dinger equation.

At large times the joint probability of tracer positions at multiple times is asymptotically a multi-variate Gaussian distribution with a covariance matrix determined by the two time correlation \eref{eq:corr}, instead of  $\langle z(t_1)z(t_2)\rangle=2D\min(t_1,t_2)$ for the Brownian motion. It would be interesting to find, for the single file problem, analogues of celebrated properties of Brownian motion, \textit{e.g.} survival probability, distribution of maximum, \textit{etc}.

Our method is simple, but strongly relies on the connection to non-interacting particles and this makes it applicable to limited systems. For example, there is no such connection for systems where inter-particle interactions are more complicated than hardcore repulsion, or when defined on a lattice (for example, symmetric simple exclusion process).

In the present work we noticed an important connection between the tracer statistics and the current statistics in the quenched case. This connection could be useful for general single file problems, in particular for the symmetric simple exclusion process where it has been challenging to calculate the large deviation function of the tracer position: since the distribution of the integrated current is known at density $1/2$ for the quenched case \cite{Gerschenfeld2009,Gerschenfeld2009Bethe}, this connection gives the full statistics of the tracer position in the quenched case for the symmetric simple exclusion process at density $1/2$.

\ack
We thank P. L. Krapivsky and Kirone Mallick for many discussions and exchange of information on the single file problem. The work of TS is supported by a junior research chair of the Philippe Meyer Institute for Theoretical Physics at Ecole Normale Sup\'{e}rieure, Paris. We thank the Galileo Galilei Institute for Theoretical Physics in Florence for excellent working conditions and the INFN for partial support.

\appendix

\section{Relation between the distribution of integrated current and the distribution of tracer position \label{app:current}}
We consider an arbitrary single file system where the tracer always starts at the origin. The initial state is quenched at a uniform density $\rho$. Let $x(t)$ be the position of the tracer particle at time $t$ and let $Q_t(y)$ denotes the time-integrated current at position $y$ up to time $t$. We define the cumulant generating functions in the quenched setting for these two quantities as
\numparts
\begin{eqnarray}
\mu(\lambda)&=&\lim_{t\rightarrow \infty}\frac{1}{\sqrt{t}}\bigg\langle\ln \bigg \langle e^{\lambda\> x(t)} \bigg \rangle_{\textrm{\tiny evolution}}\bigg\rangle_{\textrm{\tiny initial}} \label{eq:mu def}\\
\nu(\lambda,\xi)&=&\lim_{t\rightarrow \infty}\frac{1}{\sqrt{t}}\bigg\langle\ln \bigg \langle e^{\lambda\> Q_t(\xi\sqrt{t})} \bigg \rangle_{\textrm{\tiny evolution}}\bigg\rangle_{\textrm{\tiny initial}}.
\label{eq:nu def}
\end{eqnarray}
\endnumparts
In fact, if the initial state has a flat profile at density $\rho$, one finds by translation invariance that there is no $\xi$ dependence in $\nu(\lambda,\xi)$ and 
\begin{equation}
\nu(\lambda,\xi)=\nu(\lambda,0).
\label{eq:translation}
\end{equation}

\begin{proposition}
For a general single file problem we are going to show that
\begin{equation}
\mu(\lambda)=\nu\left(\frac{\lambda}{\rho},0\right).
\label{eq:app proposition one}
\end{equation}
\end{proposition}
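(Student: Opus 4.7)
The plan is to establish the proposition by connecting the events $\{x(t) > y\}$ and $\{Q_t(y) \geq \rho y\}$ pathwise, and then lifting the relation to the cumulant generating functions by a Legendre transform. The key ingredient is the rank-preservation property of single file diffusion, which is general and does not rely on the particles being Brownian.

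First, I would use the fact that the order of the particles is frozen by the dynamics. Since the tracer starts at the origin, its rank equals $N_0((-\infty,0))$, and this rank is conserved in time. Therefore, for $y>0$, the tracer sits above $y$ at time $t$ if and only if at most $N_0((-\infty,0))$ particles lie in $(-\infty,y]$ at time $t$. Rewriting particle conservation as $N_t((-\infty,y]) = N_0((-\infty,y]) - Q_t(y)$, this condition becomes
\[
Q_t(y) \;\geq\; N_0((0,y]),
\]
with an analogous relation for $y<0$. This identity is pathwise: it holds for every realization of both the initial condition and the stochastic evolution.

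Next, I would specialize to the quenched setting, in which the initial profile has macroscopically uniform density $\rho$. For $y=\xi\sqrt{t}$ the count $N_0((0,y])$ equals $\rho\xi\sqrt{t}$ up to subextensive fluctuations ($O(1)$ for a strictly periodic spacing, $O(t^{1/4})$ for Poisson-like ensembles), which are negligible on the $\sqrt{t}$ scale where the large deviation function lives. Hence, at the level of the rate functions, the pathwise identity yields
\[
\phi_q(\xi) \;=\; \Phi_q(\rho\xi),
\]
where $\phi_q$ is the rate function of $x(t)/\sqrt{t}$ appearing in \eref{eq:ldform} and $\Phi_q$ is the rate function of $Q_t(\xi\sqrt{t})/\sqrt{t}$. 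By the translation invariance \eref{eq:translation}, $\Phi_q$ does not depend on $\xi$.

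Finally, I would apply Legendre duality. Since $\mu(\lambda)=\sup_\xi\{\lambda\xi-\phi_q(\xi)\}$ and $\nu(\lambda,0)=\sup_c\{\lambda c-\Phi_q(c)\}$, substituting $c=\rho\xi$ in the first supremum gives $\mu(\lambda)=\sup_c\{(\lambda/\rho)c-\Phi_q(c)\}=\nu(\lambda/\rho,0)$, which is \eref{eq:app proposition one}. The main obstacle I anticipate is justifying the replacement $N_0((0,y])\to\rho y$ inside the quenched average: because the quenched CGFs defined in \eref{eq:mu def} and \eref{eq:nu def} take the logarithm before averaging over initial conditions, one must argue that the initial fluctuations of $N_0((0,y])$ about $\rho y$ are subextensive on the $\sqrt{t}$ scale for typical initial configurations, and hence do not shift the rate function.
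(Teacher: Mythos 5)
Your proposal is correct and follows essentially the same route as the paper: both rest on the order-preservation identity linking the event $\{x(t)>y\}$ to the current $Q_t(y)$ exceeding the initial particle count in $(0,y]$, replace that count by $\rho y$ in the quenched setting, and transfer the resulting relation $\phi_q(\xi)=\Phi_q(\rho\xi)$ to the generating functions by Legendre duality with the substitution $c=\rho\xi$. Your version is slightly more explicit than the paper's (which states the identity directly at the level of point probabilities, $P(x(t))=P(Q_t(x(t))=\rho x(t))$, without spelling out the rank/conservation bookkeeping or the subextensive-fluctuation caveat), but the underlying argument is the same.
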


\begin{proof}
We prove this by showing that the following large deviation functions are equal.
\numparts
\begin{eqnarray}
\phi(\xi)&=&-\lim_{t\rightarrow \infty}\frac{\ln P(x(t)=\xi\sqrt{t})}{\sqrt{t}} \label{eq:app ldfa}\\
\varphi(q,\xi)&=&-\lim_{t\rightarrow \infty}\frac{\ln P(Q_t(\xi\sqrt{t})=q\sqrt{t})}{\sqrt{t}}
\label{eq:app ldfb}
\end{eqnarray}
\endnumparts
where $P(x(t))$ is the probability of the tracer position $x(t)$ at time $t$, whereas $P(Q_t(y))$ is the probability of integrated current $Q_t(y)$ at time $t$ at position $y$. The $\sqrt{t}$ scaling is expected for a diffusive system.

The equality between \eref{eq:app ldfa} and \eref{eq:app ldfb} comes simply from the fact that for the tracer to travel a distance $x(t)$, all the particles which were initially between the origin and $x(t)$ must cross the position $x(t)$. For the quenched initial state with uniform density $\rho$ this number of particles is $\rho \vert x(t) \vert$. This leads to
\begin{equation}
	P(x(t))= P\bigg(Q_t(x(t))=\rho x(t) \bigg).
	\label{eq:app prob R}
\end{equation}
Note that for the annealed case where the initial density is fluctuating, this equality would not hold.

Using the above equality and the large deviation form (\ref{eq:app ldfa},\,\ref{eq:app ldfb}) we get
\begin{equation}
\phi(\xi)=\varphi(\xi\>\rho,\xi).
	\label{eq:app rel 2}
\end{equation}

The cumulant generating functions $\mu(\lambda)$ and $\nu(\lambda,\xi)$ are related to the corresponding large deviation functions $\phi(\xi)$ and $\varphi(q,\xi)$ by a Legendre transformation
\begin{eqnarray}
\mu(\lambda)=\max_{\xi}\{\lambda\>\xi-\phi(\xi)\} \\
\nu(\lambda,\xi)=\max_{q}\{\lambda\>q-\varphi(q,\xi)\} 
\end{eqnarray}
Taking the Legendre transformation and using the translation invariance \eref{eq:translation} we obtain the identity \eref{eq:app proposition one}.
\end{proof}

\subsection*{An example}
We now check the relation \eref{eq:app proposition one} for a system of Brownian point particles with hard core repulsion. Comparing \eref{eq:mu def} and \eref{eq:muq} we find that $\mu(\lambda)=t^{-\frac{1}{2}}M_q(\lambda,t)$ where we have calculated $M_q(\lambda,t)$ in \eref{eq:mu one q final}. 
Here we show that the cumulant generating function $\nu(\lambda,0)$ of integrated current is given by
\begin{equation}
\nu(\lambda,0)=\rho\>\sigma\int_{0}^{\infty}d\eta\>\ln\left[1+\sinh^2\left(\frac{\lambda}{2}\right)\erfc{\eta}\erfc{-\eta}\right].
\label{eq:nu Brownian}
\end{equation}
The two results are consistent with \eref{eq:app proposition one}.

To derive \eref{eq:nu Brownian} we realize that for the integrated current we can simply ignore the hard core repulsion between particles and treat them as independent particles. Following a calculation similar to the one in \sref{sec:computation} we find
\begin{equation}
\bigg \langle e^{\lambda \> Q_t(0)} \bigg \rangle_{\textrm{\tiny evolution}}=\prod_j F_{t}(y_j,\lambda,0)
\label{eq:app Qt product}
\end{equation}
where $F(y,\lambda,x)$ is the function \eref{eq:F} and $j$ denotes all particles. Using the scaling \eref{eq:single particle} it is simple to show that the cumulant generating function
\begin{equation}
 \nu(\lambda,0)=\rho\>\int_{0}^{\infty}d\eta\>\ln f(\eta,\lambda,0)+\rho\>\int_{0}^{\infty}d\eta\>\ln f(\eta,-\lambda,0).
\end{equation}
A straightforward algebra using \eref{eq:f} and the identity $\erfc{x}+\erfc{-x}=2$ leads to the result \eref{eq:nu Brownian}.

\subsection*{Cumulant generating function for SSEP}
For the symmetric simple exclusion process (SSEP) the generating function $\nu(\lambda,0)$ of the integrated current is known \cite{Gerschenfeld2009} at density $\rho=1/2$. The expression can be derived by using a relation between the cumulant generating function of integrated current in the quenched and the annealed settings \cite[Eq.~43]{Gerschenfeld2009}:
\begin{equation}
\fl \qquad \qquad \nu(\lambda,0)\equiv\nu_{quenched}(\lambda,0)=\frac{1}{\sqrt{2}}\,\nu_{annealed}(\lambda,0)\qquad \textrm{at $\rho=\frac{1}{2}$ for SSEP,}
\label{eq:nu q a reln}
\end{equation}
where the latter quantity is defined as
\begin{equation}
\nu(\lambda,0)_{annealed}=\lim_{t\rightarrow \infty}\frac{1}{\sqrt{t}}\ln \bigg \langle e^{\lambda\> Q_t(0)} \bigg \rangle_{\textrm{\tiny evolution+initial}}.
\end{equation}
For SSEP the $\nu_{annealed}(\lambda,0)$ was calculated using Bethe ansatz in \cite{Gerschenfeld2009Bethe}. At any uniform density $0\le \rho \le 1$ it has an expression
\begin{equation}
\nu_{annealed}(\lambda,0)=\frac{1}{\pi}\int_{-\infty}^{\infty}dk\ln\left(1+4\rho(1-\rho)\sinh^2\left(\frac{\lambda}{2}\right)e^{-k^2}\right).
\end{equation}
Substituting above expression in \eref{eq:nu q a reln} and using the identity \eref{eq:app proposition one} we arrive at the result \eref{eq:gen fnc for ssep}.

\section{A formal solution of $f[\eta,b,\xi]$ \label{app:sch}}
To analyse this scaled functional $f[\eta,b,\xi]$ in \eref{eq:F in h multi} we define
\begin{equation}
 G(u,\tau \vert \eta, 0)=\bigg\langle e^{\int_{0}^{\tau}d\tau_1 b(\tau_1)\>\Theta\left(z(\tau_1)-\xi(\tau_1)\right)}\bigg\rangle_{z(0)=\eta}^{z(\tau)=u}
 \label{eq:G}
\end{equation}
for $0\le \tau \le 1$, where the subscripts and the superscripts on $\langle \rangle$ denotes the fixed initial and final positions of a Brownian motion $z(\tau_1)$, respectively. 
From \eref{eq:F multi} and \eref{eq:F in h multi} one can see that 
\begin{equation}
 f[\eta,b,\xi]=\int_{-\infty}^{\infty}du \> G(u,1 \vert -\eta, 0)
 \label{eq:fG}
\end{equation}

Considering a small increment of $\tau$ in the formula \eref{eq:G} we write
\begin{eqnarray*}
\fl G(u,\tau+\epsilon\vert \eta, 0)  =\int_{-\infty}^{\infty}dw\,g(w,\epsilon\vert 0,0)G(u+w,\tau\vert \eta, 0)\bigg\langle e^{\int_{\tau}^{\tau+\epsilon}d\tau_1 b(\tau_1)\>\Theta\left(z(\tau_1)-\xi(\tau_1)\right)}\bigg\rangle_{z(\tau)=u+w}^{z(\tau+\epsilon)=u}
\end{eqnarray*}
where $g(x,t\vert y,0)$ is the diffusion propagator defined in \eref{eq:g}. 

For small $\epsilon$ we expand the right hand side in powers of $\epsilon$ and obtain
\begin{eqnarray*}
\fl  G(u,\tau+\epsilon\vert \eta, 0) = G(u,\tau \vert \eta,0)+\epsilon\bigg\{b(\tau)\Theta(u-\xi(\tau))G(u,\tau \vert \eta,0) +\frac{\sigma^2}{4}\partial_{uu}G(u,  \tau\vert \eta, 0)\bigg\}+\cdots
\label{eq:2.2}
\end{eqnarray*}
where we used  \eref{eq:g} and $g(w,0\vert 0,0)=\delta(w)$. Taking $\epsilon \rightarrow 0$ we arrive at
\begin{equation}
 \partial_{\tau}G-\frac{\sigma^2}{4}\partial_{uu}G=b(\tau)\Theta\left(u-\xi(\tau)\right)G.
 \label{eq:G sol}
\end{equation}
This has the form of an imaginary time Schr\"{o}dinger equation with a time dependent step potential on the right hand side.
The initial condition for $G$ can be determined from its definition \eref{eq:G} which leads to
\begin{equation}
G(u,0\vert \eta,0)=\delta(u-\eta).
\end{equation}
Solution of \eref{eq:G sol} combined with relation \eref{eq:fG} determines the functional $f[\eta,b,\xi]$.

It is difficult to solve \eref{eq:G sol}. For small $b(t)$ one can however write a series solution of \eref{eq:G sol} as
\begin{eqnarray}
\fl  G(u,\tau \vert \eta, 0)=g(u,\tau\vert \eta,0)+\sum_{k=1}^{\infty}\int_{0}^{\tau}d\tau_1\cdots\int_{0}^{\tau_{k-1}}d\tau_k \,b(\tau_1)\cdots b(\tau_k)\cr
\fl \qquad \qquad  \int_{\xi(\tau_1)}^{\infty}du_1\cdots \int_{\xi(\tau_k)}^{\infty}du_k \, g(u,\tau \vert u_1,\tau_1)\cdots g(u_{k-1},\tau_{k-1} \vert u_k,\tau_k)\, g(u_k,\tau_k\vert \eta,0)
\end{eqnarray}
Substituting this in \eref{eq:fG} leads to the solution \eref{eq:f series}.
\section*{References}
\bibliographystyle{iopart-num}
\bibliography{reference}

\end{document}